\documentclass[ aps,
                pra,
                showpacs,
                amssymb,
                nofootinbib,
                superscriptaddress,
                twocolumn,
                longbibliography,noeprint, 10pt]{revtex4-2}

\usepackage[utf8]{inputenc}
\usepackage{amsmath,amssymb,amsthm,mathtools}
\usepackage{booktabs,array,enumitem}
\usepackage[expansion=false]{microtype}
\usepackage[dvipsnames]{xcolor}
\usepackage[colorlinks=true,linkcolor=MidnightBlue,citecolor=MidnightBlue,urlcolor=MidnightBlue]{hyperref}

\newtheorem{theorem}{Theorem}

\newtheorem{lemma}{Lemma}
\newtheorem{corollary}{Corollary}
\theoremstyle{definition}

\theoremstyle{remark}

\newcommand{\spec}{\operatorname{spec}}
\newcommand{\Tr}{\operatorname{Tr}}
\newcommand{\N}{\mathbb{N}}

\newcommand{\proj}[1]{|#1\rangle\langle #1|}

\newcommand{\one}{\mathbf 1}

\begin{document}

\title{Dimension-free marginal inequalities for pure three-party states}

\author{Simon Morelli}
\email{simon.morelli@tuwien.ac.at}
\affiliation{Technische Universit{\"a}t Wien, Atominstitut \& Vienna Center for Quantum Science and Technology (VCQ),  Stadionallee 2, 1020 Vienna, Austria}

\begin{abstract}
The quantum marginal problem, concerned with the compatibility of local and global properties of a multipartite quantum system, is arguably one of the fundamental questions in quantum theory. In this article we introduce a novel family of inequalities bounding the allowed local spectra of a pure tripartite state. The number of inequalities in this family grows with the dimension of the system. Most importantly, the inequalities hold for arbitrary dimensions. For equal local dimension at most 4, they generate the complete set of eigenvalue inequalities.
\end{abstract}

\maketitle

\section{Introduction}

The quantum marginal problem (QMP) is arguably a central problem in quantum information. It asks the fundamental question of the compatibility of the parts of a multipartite system~\cite{bravyi2003requirementscompatibilitylocalmultipartite,higuchi2003oneparticlereduceddensitymatrices,klyachko2004quantummarginalproblemrepresentations,YuSimacherWyderkaNguyenGuhne21,Huber_2025}.
Beside its broad theoretical importance, the QMP is also of practical relevance. Often experiments and physical models only give access to parts of a complex composite quantum system. At the same time states are often uniquely defined by their one and two-body reduced marginals~\cite{LindenPopescuWootters_2002,Wyderka_2017}. In many-body systems in condensed matter, e.g. for spin models or interacting electrons in solids, the physically relevant quantities, such as mean values of observables, susceptibilities or correlation functions, typically depend on the properties of the one- and two-body reduced density matrices.
These do not directly depend on the Hamiltonian determining the system, which defines global states of the system, e.g., the ground state. 
The marginal problem hence connects the local physical properties that are compatible with peculiarities of global phases, including magnetism, superconductivity or topologically protected states and sits at the centre of the investigation of correlations in statistical physics and condensed matter systems~\cite{Liu_2007,BenavidesRiverosSchilling2020,CastilloSchilling21}.

Therefore it is no surprise that a case of special interest is the pure QMP, asking which marginals are compatible with a pure global state. 
Focusing on single party marginals, it is sufficient to look at the eigenvalues of the local states.
In this article we consider the simplest non-trivial problem of 3 parties.
In principle, this case was solved  by Klyachko~\cite{klyachko2004quantummarginalproblemrepresentations}. For a fixed dimension the admissible spectra form a convex polytope whose facets come from Schubert calculus. However, in practice it quickly becomes unfeasible to calculate the linear inequalities that govern the allowed spectra and a full characterisation of the set is only known for small dimensions~\cite{Vergne_2017,bulois2025algorithmcomputekroneckercone}.
This is where our result becomes relevant. 
It gives a constructive way of generating such inequalities. Even if these inequalities do not completely characterise the allowed spectra, they give an outer approximation.

In the next section~\ref{sec:algebra} we briefly introduce the necessary algebraic structures, before reviewing the known results of the tripartite pure QMP in section~\ref{sec:review}. We then look at the structure of the known inequalities in section~\ref{sec:general inequalities} and present our main contribution in section~\ref{sec:GHGK}. Finally, in section~\ref{sec:d4} we compare the introduced inequalities with the known inequalities in dimension 4 and 5.

\section{The algebraic framework}\label{sec:algebra}

Two eigenvalue problems are often compared in the literature. \emph{Horn's problem} asks which spectra $A$, $B$ and $A+B$ can have, it is governed by Littlewood--Richardson coefficients and completely solved. The tripartite pure \emph{quantum marginal problem} asks which spectra the marginals of a pure global state can have. It is governed by Kronecker coefficients and solved explicitly only in small dimensions.

\subsection{Partitions and representations}\label{ssec:part and rep}
 
A partition of an integer $n$, $\lambda\vdash n$, is a weakly decreasing sequence of nonnegative integers $\lambda_1\ge\lambda_2\ge\dots$ with $|\lambda|:=\sum_i\lambda_i=n$.
It is drawn as a Young diagram with $\lambda_i$ boxes in row $i$. Its length $\ell(\lambda)$ is the number of rows, and $\lambda_1$ is the first row, also referred to as the width of the diagram. 
We write $\lambda\subseteq\nu$ if $\lambda_i\le\nu_i$ for all $i$ and the boxes of $\nu$ not in $\lambda$ form the skew diagram $\nu/\lambda$.
We write $\lambda\cap\mu:=(\min(\lambda_i,\mu_i))_i$ and $\hat\lambda:=(\lambda_2,\lambda_3,\dots)$. 
 
Denote by $\pi_\lambda$ the irreducible representations of the symmetric group $S_n$ with characters $\chi^\lambda$ and the irreducible polynomial representations of $GL(d)$ by $V^d_\lambda$, $\ell(\lambda)\le d$. By Schur--Weyl duality
\begin{align}
    (\mathbb C^d)^{\otimes n}\;\cong\bigoplus_{\lambda\vdash n,\ \ell(\lambda)\le d}V^d_\lambda\otimes\pi_\lambda,
\end{align}
with $GL(d)$ acting diagonally and $S_n$ permuting the factors. 
The tensor product of two representations of $GL(d)$ corresponding $\lambda$ and $\mu$ is a combination of representations of $GL(d)$
\begin{align}
    V^d_\lambda\otimes V^d_\mu\;\cong\;\bigoplus_\nu c^\nu_{\lambda\mu}\,V^d_\nu,
\end{align}
where the multiplicities $c^\nu_{\lambda\mu}$, with $|\nu|=|\lambda|+|\mu|$, are the \emph{Littlewood--Richardson coefficients}~\cite{fulton2000eigenvaluesinvariantfactorshighest}.
Similarly, the tensor product of two representations $\pi_\lambda$ and $\pi_\mu$ of $S_n$ is a combination of representations of $S_n$
\begin{align}
    \pi_\lambda\otimes\pi_\mu\;\cong\;\bigoplus_{\nu\vdash n}g(\lambda,\mu,\nu)\,\pi_\nu,
\end{align}
with \emph{Kronecker coefficients}
\begin{align}
    g(\lambda,\mu,\nu)=\frac1{n!}\sum_{\sigma\in S_n}\chi^\lambda(\sigma)\chi^\mu(\sigma)\chi^\nu(\sigma).
\end{align}
So $g$ is symmetric in its 3 arguments, and $\pi_\nu$ is a constituent of $\pi_\lambda\otimes\pi_\mu$ if $g(\lambda,\mu,\nu)\neq0$.

\subsection{Horn's problem}\label{ssec:horn}

Let $A,B$ be Hermitian $d\times d$ matrices with spectra $a,b$, and let $c$ be the spectrum of $C=A+B$. Horn's problem asks which triples $(a,b,c)$ occur.
The following result becomes useful later.

\begin{lemma}\label{lem:Weyl}
Let $A,B\ge0$ act on a finite-dimensional space, $C=A+B$, with decreasing zero-padded spectra $a,b,c$. Then~\cite{Bhatia97}
\begin{enumerate}[label=(\roman*),leftmargin=*,itemsep=1pt]
\item $c_i\ge a_i$ and $c_i\ge b_i$ (monotonicity)
\item $c_{i+j-1}\le a_i+b_j$ (Weyl's inequality).
\end{enumerate}
\end{lemma}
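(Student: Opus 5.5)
Both parts follow from the Courant--Fischer min-max characterisation of eigenvalues. Write $a,b,c$ for the decreasing spectra, zero-padded to the common dimension $N$ of the space on which $A,B,C$ act. For a Hermitian $X$ on this space we have
\begin{align}
\lambda_k(X)=\max_{\dim S=k}\ \min_{x\in S,\ \|x\|=1}\langle x,Xx\rangle=\min_{\dim T=N-k+1}\ \max_{x\in T,\ \|x\|=1}\langle x,Xx\rangle .
\end{align}
The zero-padding is harmless. Since $A,B\ge0$, every padded entry is $0$, and this is at most the corresponding true eigenvalue. So it suffices to prove the statements for indices $i,j\le N$ and $i+j-1\le N$; for larger indices both sides involve padded zeros and the claims hold trivially.

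For (i), note that $B\ge0$ gives $\langle x,Cx\rangle=\langle x,Ax\rangle+\langle x,Bx\rangle\ge\langle x,Ax\rangle$ for every unit vector $x$. Hence, for each fixed $k$-dimensional subspace $S$, the inner minimum for $C$ is at least that for $A$. Taking the maximum over $S$ yields $c_k\ge a_k$. The same argument with the roles of $A$ and $B$ swapped gives $c_k\ge b_k$.

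For (ii), we use the min-max form. Let $T_A$ be the span of the eigenvectors of $A$ belonging to $a_i,a_{i+1},\dots,a_N$; it has dimension $N-i+1$, and $\langle x,Ax\rangle\le a_i$ for every unit $x\in T_A$. Define $T_B$ analogously, with dimension $N-j+1$ and $\langle x,Bx\rangle\le b_j$ on it. By the dimension formula,
\begin{align}
\dim(T_A\cap T_B)\ge N-i-j+2=N-(i+j-1)+1 .
\end{align}
Choose any subspace $T\subseteq T_A\cap T_B$ of dimension exactly $N-(i+j-1)+1$. On $T$ we have $\langle x,Cx\rangle\le a_i+b_j$, so the min-max formula gives $c_{i+j-1}\le a_i+b_j$. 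When $i+j-1>N$, the left-hand side is a padded $0$ while the right-hand side is nonnegative, so the inequality again holds.

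The only step requiring care is the bookkeeping for the padded indices in (ii). There, positivity of $A$ and $B$ is what makes the padded entries consistent with the inequality. For Hermitian matrices that are not positive semidefinite, padding with zeros would in general break the statement.
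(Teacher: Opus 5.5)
Your proof is correct: you get monotonicity from the max--min form using $B\ge0$, and Weyl's inequality from intersecting the two tail eigenspaces and applying the min--max form, with positivity used exactly where the zero-padding needs it. The paper gives no proof of its own and only cites Bhatia, where the lemma is proved by this same Courant--Fischer dimension-counting argument, so your proposal follows the intended route.
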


For $I=\{i_r>\dots>i_1\}\subset\N$ put $\lambda(I):=(i_r-r,\dots,i_1-1)$ and $v_I:=\sum_{i\in I}v_i$. A triple $(I,J,K)$ of $r$-element sets is a Horn triple if $c^{\lambda(K)}_{\lambda(I)\lambda(J)}\neq0$. Importantly, this condition does not involve $d$.

\begin{theorem}[Klyachko, Knutson--Tao]\label{thm:horn}
\leavevmode
\begin{enumerate}[label=(\roman*),itemsep=1pt]
\item $(a,b,c)$ occurs iff $\sum_ic_i=\sum_ia_i+\sum_ib_i$ and $c_K\le a_I+b_J$ for all Horn triples with $r<d$ \cite{Klyachko98,knutson1999honeycombmodelglntensor,fulton2000eigenvaluesinvariantfactorshighest}.
\item For partitions $\lambda,\mu,\nu$ of length at most $d$, $c^\nu_{\lambda\mu}\neq0$ iff $(\lambda,\mu,\nu)$ occurs as $(a,b,c)$~\cite{fulton2000eigenvaluesinvariantfactorshighest}.
\end{enumerate}
\end{theorem}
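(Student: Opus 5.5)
This theorem is a cited classical result (Klyachko, Knutson--Tao, Fulton), so the plan is to reconstruct the standard argument rather than find a new one. The proof has two directions for (i), namely necessity and sufficiency, plus the saturation statement (ii). Throughout, set $\lambda=\lambda(I)$, $\mu=\lambda(J)$, $\nu=\lambda(K)$.

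\emph{Necessity in (i).} The plan is the Schubert-calculus argument. Fix the eigenbases of $A$, $B$, $C$ and consider the complete flags they generate, $F_\bullet$, $G_\bullet$, $H_\bullet$. The condition $c^{\nu}_{\lambda\mu}\neq0$ is equivalent to a nonzero intersection number for Schubert classes in the Grassmannian $\mathrm{Gr}(r,d)$. By Kleiman transversality this gives an $r$-dimensional subspace $L$ satisfying $\dim(L\cap F_{i_k})\ge k$, and likewise for $J$ with respect to $G_\bullet$ and for $K$ with respect to $H_\bullet$ (using the flags appropriate to each side). A Ky Fan/Hersch--Zwahlen type min--max for partial traces then gives $\Tr(P_L A P_L)\le a_I$, $\Tr(P_L B P_L)\le b_J$ and $\Tr(P_L C P_L)\ge c_K$. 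Since $\Tr(P_L C P_L)=\Tr(P_L A P_L)+\Tr(P_L B P_L)$, the inequality $c_K\le a_I+b_J$ follows. The trace identity $\sum_i c_i=\sum_i a_i+\sum_i b_i$ is immediate.

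\emph{Sufficiency in (i) and statement (ii).} Here the plan goes through representation theory. First, for integral spectra, $c^\nu_{\lambda\mu}\neq0$ implies that $(\lambda,\mu,\nu)$ occurs. This is the Heckman / geometric invariant theory step: the moment map image of the coadjoint orbits $\mathcal O_\lambda\times\mathcal O_\mu\times\mathcal O_{-\nu}$ contains $0$ precisely when some multiple $N$ satisfies $c^{N\nu}_{N\lambda,N\mu}\neq0$. Second, Klyachko's theorem identifies the moment polytope with the cone cut out by the inequalities over all Schubert-calculus triples. Third, the Knutson--Tao saturation theorem, proved via honeycombs, states that $c^{N\nu}_{N\lambda,N\mu}\neq0$ implies $c^\nu_{\lambda\mu}\neq0$. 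Combining these yields (ii). Saturation also shows that the Horn triples defined by $c\neq0$ coincide with those defined through Schubert intersection numbers. Finally, Belkale's and the Knutson--Tao--Woodward reduction shows that the triples with $r<d$ suffice.

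\emph{Main obstacle.} The hard step is saturation: proving that a positive point of the Littlewood--Richardson cone yields a nonzero coefficient, and not merely a nonzero coefficient after scaling. I would follow Knutson--Tao. Represent $c^\nu_{\lambda\mu}$ as the number of integral honeycombs with the given boundary. Then show that a largest-lift argument applied to a generic real honeycomb on the boundary produces an integral one. Since the paper only uses these results, in context it suffices to cite \cite{fulton2000eigenvaluesinvariantfactorshighest}.
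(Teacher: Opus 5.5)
Your plan follows the same route as the paper, which gives no argument of its own. The paper cites Klyachko, Knutson--Tao and Fulton, and remarks that the direction ``$(\lambda,\mu,\nu)$ occurs $\Rightarrow c^\nu_{\lambda\mu}\neq0$'' rests on saturation, which is where you place the main obstacle. The outline is sound, but several details of your reconstruction are misstated.

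\textbf{Flags in the necessity step.} The conditions for $A$ and $B$ are not ``likewise'' those for $C$. Let $v_1,\dots,v_d$ be eigenvectors of $A$ for $a_1\ge\dots\ge a_d$ and $F_k=\mathrm{span}(v_1,\dots,v_k)$. Then Hersch--Zwahlen gives $\Tr(P_LA)\ge a_I$ whenever $\dim(L\cap F_{i_k})\ge k$ for all $k$, which is the wrong direction. You need the opposite flag $\tilde F_k=\mathrm{span}(v_d,\dots,v_{d+1-k})$ together with the reversed index set $d+1-I$, i.e.\ $\dim(L\cap\tilde F_{d+1-i_k})\ge r+1-k$. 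This Schubert variety has class $\sigma_{\lambda(I)}$ rather than $\sigma_{\lambda(I)^\vee}$. Only with this choice is the relevant intersection number $\int_{\mathrm{Gr}(r,d)}\sigma_{\lambda(I)}\sigma_{\lambda(J)}\sigma_{\lambda(K)^\vee}=c^{\lambda(K)}_{\lambda(I)\lambda(J)}$.

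\textbf{Special flags.} Kleiman's theorem only concerns generic flags, but eigenflags can be in special position, e.g.\ when $A$ and $B$ commute. You need that a nonzero product of classes forces a nonempty intersection for every triple of flags. This holds because the locus of flag triples with nonempty intersection is closed and contains a dense open set.

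\textbf{Roles of the cited results.} With the paper's definition of a Horn triple, part (i) is literally Klyachko's theorem. The Schubert structure constants of $\mathrm{Gr}(r,d)$ are Littlewood--Richardson coefficients by a classical fact independent of saturation. So saturation does not enter (i); what it does is identify these triples with Horn's recursively defined ones. For the statement at hand, saturation is needed only in the direction ``occurs $\Rightarrow c\neq0$'' of (ii). Likewise, Belkale and Knutson--Tao--Woodward do not justify the restriction $r<d$. They show that the triples with $c^{\lambda(K)}_{\lambda(I)\lambda(J)}=1$ already suffice and are irredundant. The restriction $r<d$ costs nothing, because the only $r=d$ triple inside $[d]$ is $([d],[d],[d])$, which is the trace identity.
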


The implication ``$(\lambda,\mu,\nu)$ occurs $\Rightarrow c^\nu_{\lambda\mu}\neq0$'' rests on saturation, if $c^{N\nu}_{N\lambda,N\mu}\neq0$ for some $N\ge1$, then $c^\nu_{\lambda\mu}\neq0$ \cite{knutson1999honeycombmodelglntensor}. So Horn's problem is governed exactly by the support of the Littlewood--Richardson coefficients. Horn's recursion \cite{Horn62} generates the Horn triples from integer inequalities on the index sets alone. Moreover, Littlewood--Richardson positivity can be decided in polynomial time \cite{mulmuley2005geometriccomplexityiiideciding,bürgisser2013decidingpositivitylittlewoodrichardsoncoefficients}.

For $r=1$ the Horn triples are $(\{i\},\{j\},\{i+j-1\})$ (Pieri's rule). They give Weyl's inequalities $c_{i+j-1}\le a_i+b_j$, (ii) in Lemma~\ref{lem:Weyl}.
The Horn triple $(K,[r],K)$ gives Lidskii's inequality $c_K\le a_K+b_{[r]}$ and $([r],[r],[r])$ results in Ky Fan's inequality~\cite{Bhatia97}.

\subsection{The quantum marginal problem}\label{ssec:qmp}

Let $\psi\in\mathbb C^{d_A}\otimes\mathbb C^{d_B}\otimes\mathbb C^{d_C}$ be a pure state of arbitrary finite dimensions with local spectra $(\spec\rho_A,\spec\rho_B,\spec\rho_C)$ . The three-party, pure QMP asks which triples $(a,b,c)$ occur, such triples are called admissible. Equivalently, the problem can be reformulated as a question about the admissible triples $(\spec\rho_A,\spec\rho_B,\spec\rho_{AB})$ of a two-party mixed state, since $\rho_{AB}$ and $\rho_C$ have the same non-zero spectrum. For fixed local dimensions the admissible triples form a convex polytope, an instance of a moment polytope \cite{klyachko2004quantummarginalproblemrepresentations}. Its facets are its irredundant defining inequalities, facets that differ only by a permutation of the parties form a class.

Since $\lambda/n$ is a decreasing probability vector, partitions serve as integer approximations of spectra. By Schur--Weyl duality, $g$ gives the branching from a composite system to its parts \cite{ressayre2019horninequalitiesnonzerokronecker,Walter14}
\begin{align}
    V^{ab}_\nu\big|_{GL(a)\times GL(b)}\;\cong\bigoplus_{\lambda,\mu}g(\lambda,\mu,\nu)\,V^a_\lambda\otimes V^b_\mu .
\end{align}
For a density matrix $\rho$, the weight of $\rho^{\otimes n}$ in the block labelled $\lambda$ concentrates on $\lambda/n\approx\spec\rho$ as $n\to\infty$. This gives rise to the following result.
 
\begin{theorem}[Klyachko; Christandl--Mitchison; Christandl--Harrow--Mitchison]\label{thm:qmp-kron}
\leavevmode
\begin{enumerate}[label=(\roman*),itemsep=1pt]
\item If $g(\lambda,\mu,\nu)\neq0$, then $(\lambda,\mu,\nu)/n$ is admissible \cite{Christandl_2007}.
\item Every admissible triple is a limit of such normalized triples \cite{Christandl_2005}. The rational admissible triples are exactly the triples $(\lambda,\mu,\nu)/n$ with $g(\lambda,\mu,\nu)\neq0$ \cite{klyachko2004quantummarginalproblemrepresentations}.
\end{enumerate}
\end{theorem}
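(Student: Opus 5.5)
The plan is to handle the two parts of Theorem~\ref{thm:qmp-kron} separately. Part~(i) is an explicit construction of states from a nonzero Kronecker coefficient. Part~(ii) uses Keyl--Werner spectrum estimation together with a compactness and convexity argument.

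For (i), suppose $g(\lambda,\mu,\nu)\neq0$ with $\lambda,\mu,\nu\vdash n$ of lengths at most $d_A,d_B,d_C$. Then $\pi_\lambda\otimes\pi_\mu\otimes\pi_\nu$ contains the trivial representation of $S_n$ with multiplicity $g$. By Schur--Weyl duality there is a nonzero $S_n$-invariant vector $\Psi$ in $V^{d_A}_\lambda\otimes V^{d_B}_\mu\otimes V^{d_C}_\nu\otimes(\pi_\lambda\otimes\pi_\mu\otimes\pi_\nu)\subset(\mathbb C^{d_A}\otimes\mathbb C^{d_B}\otimes\mathbb C^{d_C})^{\otimes n}$.
\begin{enumerate}[label=(\alph*),leftmargin=*,itemsep=1pt]
\item Choose $\Psi$ to be a highest weight vector for the torus $T_A\times T_B\times T_C$, which is possible because the isotypic component is $GL$-stable.
\item Its $A$-marginal then lies in the $\lambda$-block, and its diagonal has weight $\lambda$, so that $\Tr(\rho_A^{(n)} E_{ii})$ summed over copies gives $\lambda_i$.
\item Averaging the one-copy marginals of the permutation-symmetric pure state $\Psi$ gives a single-copy state. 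Its $A$-marginal has diagonal $\lambda/n$, and by Schur--Horn its spectrum majorizes $\lambda/n$.
\end{enumerate}
This produces a mixed state, not a pure tripartite state with exactly the spectra $\lambda/n$. To fix this I will instead invoke the convexity/moment map picture: the moment map image of the orbit closure of the highest weight vector $[\Psi]$ in $\mathbb P(V_\lambda\otimes V_\mu\otimes V_\nu)$ contains $(\lambda,\mu,\nu)$. This follows from the Kempf--Ness/Mumford argument, which says that a nonzero invariant of $SL$-twisted weight lies in the moment polytope.

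The cleaner route is Christandl--Mitchison's. Take a state $\psi$ and project $\psi^{\otimes n}$ onto $P_\lambda\otimes P_\mu\otimes P_\nu$. Conversely, start from Kempf--Ness: $g\neq0$ gives a highest weight vector in $\mathrm{Sym}^n$-type coordinates of the Veronese-embedded variety of pure states. Then the point $(\lambda,\mu,\nu)/n$ lies in the moment polytope $\Delta(\mathbb P(\mathbb C^{d_A}\otimes\mathbb C^{d_B}\otimes\mathbb C^{d_C}))$, whose points are exactly the admissible triples. I will quote the Mumford/Ness theorem that the moment polytope equals the closure of $\{\lambda/n: V_\lambda^*\text{ occurs in }\mathbb C[X]_n\}$. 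By Schur--Weyl, the degree-$n$ part decomposes with multiplicities $g(\lambda,\mu,\nu)$.

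For (ii), fix an admissible triple coming from $\psi$. The Keyl--Werner estimate gives $\|P_\lambda\rho^{\otimes n}\|\le (n+1)^{d(d-1)/2}\exp(-nD(\lambda/n\|\spec\rho))$. Apply this to $\rho_A,\rho_B,\rho_C$ and use a union bound. Since $\sum_{\lambda,\mu,\nu}\|(P_\lambda\otimes P_\mu\otimes P_\nu)\psi^{\otimes n}\|^2=1$ with polynomially many terms, some triple with a nonzero projection has all three normalized diagrams within $\epsilon_n\to0$ of the true spectra. A nonzero projection forces $g(\lambda,\mu,\nu)\neq0$, because $\psi^{\otimes n}$ is $S_n$-invariant under the diagonal action. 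Rationality plus part~(i) and convexity then give the exact characterization of rational points. The main obstacle is the rigorous step in (i) identifying nonzero Kronecker multiplicity with membership in the moment polytope, which I will cite rather than reprove.
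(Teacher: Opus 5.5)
The paper does not prove this theorem. It only cites it, after remarking that $\rho^{\otimes n}$ concentrates on blocks with $\lambda/n\approx\spec\rho$. Your Keyl--Werner argument for the limit statement in (ii) is exactly that Christandl--Mitchison proof, and it is sound. Keyl--Werner must be used as a bound on $\Tr(P_\lambda\rho^{\otimes n})$, which is what controls $\|(P_\lambda\otimes P_\mu\otimes P_\nu)\psi^{\otimes n}\|^2$; an operator-norm bound is useless on a block of exponential rank.

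\textbf{The genuine gap is the last sentence of (ii).} ``Rationality plus part (i) and convexity'' does not show that every rational admissible triple $p$ equals some $(\lambda,\mu,\nu)/n$ with $g\neq0$.
\begin{enumerate}[label=(\roman*),leftmargin=*,itemsep=1pt]
\item At interior points you already need the semigroup property $g(\lambda,\mu,\nu)\,g(\lambda',\mu',\nu')\neq0\Rightarrow g(\lambda+\lambda',\mu+\mu',\nu+\nu')\neq0$. This is a statement about the discrete support of $g$, not about convexity of the admissible set, and you never establish it.
\item At a vertex, or any rational boundary point, the limit statement only yields nearby points. Convexity yields nothing, since a vertex is not a nontrivial convex combination. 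The three-qubit GHZ point, with all marginals $(\tfrac12,\tfrac12)$, is such a vertex: there $g=0$ at $n=2$, and $g$ first becomes nonzero at $n=4$.
\end{enumerate}
To close this you need one of two ingredients:
\begin{enumerate}[label=(\alph*),leftmargin=*,itemsep=1pt]
\item The sharp Mumford--Ness/Brion theorem: the \emph{rational points} of the moment polytope are exactly the $\lambda/n$ with $V_\lambda^*\subseteq\mathbb C[X]_n$, proved via the shifting trick and Kempf--Ness.
\item Finite generation of the highest-weight semigroup of $\mathbb C[X]^U$ ($U$ maximal unipotent), plus the semigroup property. Then the polytope is the convex hull of finitely many normalized triples with $g\neq0$, and every rational point is a rational convex combination of them.
\end{enumerate}
The closure version of the theorem that you quote is not enough.

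\textbf{Part (i) rests entirely on the citation.} None of your own constructions reaches a pure state:
\begin{enumerate}[label=(\roman*),leftmargin=*,itemsep=1pt]
\item Steps (a)--(c) give a mixed state.
\item The moment image of $[\Psi]$ in $\mathbb P(V_\lambda\otimes V_\mu\otimes V_\nu)$ says nothing about $\mathbb P(\mathcal H_A\otimes\mathcal H_B\otimes\mathcal H_C)$, because $[\Psi]$ is not of the form $[\psi^{\otimes n}]$ unless $\lambda=\mu=\nu=(n)$.
\item Projecting $\psi^{\otimes n}$ is the converse direction.
\end{enumerate}
Once the theorem is cited, (i) and the limit statement are immediate, and your Keyl--Werner step becomes redundant.

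Your highest-weight idea does close (i) if you use $\Psi$ as a polynomial rather than as a state.
\begin{enumerate}[label=(\arabic*),leftmargin=*,itemsep=1pt]
\item Put $f(\psi)=\langle\Psi|\psi^{\otimes n}\rangle$. It is not identically zero, because the $\psi^{\otimes n}$ span $\mathrm{Sym}^n$. Let $\psi_0$ maximize $|f|$ on the unit sphere.
\item Let $X$ act as $X_A\otimes\one\otimes\one+\one\otimes X_B\otimes\one+\one\otimes\one\otimes X_C$, with $X_A^\dagger,X_B^\dagger,X_C^\dagger$ upper triangular (the Borel for which $\Psi$ is highest weight).
\item Stationarity of $\log|f(e^{tX}\psi_0)|^2-n\log\|e^{tX}\psi_0\|^2$ at $t=0$ gives $n\,\mathrm{Re}\langle\psi_0|X\psi_0\rangle=\sum_i\bigl(\lambda_i\mathrm{Re}(X_A)_{ii}+\mu_i\mathrm{Re}(X_B)_{ii}+\nu_i\mathrm{Re}(X_C)_{ii}\bigr)$.
\item Choosing $X_A=E_{ii}$, $E_{ji}$ and $\mathrm{i}E_{ji}$ with $j>i$ gives $n(\rho_A)_{ii}=\lambda_i$ and $(\rho_A)_{ij}=0$ for $i<j$. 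Hence $\rho_A=\operatorname{diag}(\lambda)/n$ exactly, and likewise for $B$ and $C$.
\end{enumerate}
Christandl--Harrow--Mitchison reach (i) differently: semigroup property, a Haar average over $\psi^{\otimes kn}$, Keyl--Werner, and compactness.
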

 
Unlike Littlewood--Richardson coefficients, Kronecker coefficients are not saturated. $[1,1]$ is the sign representation, so $g((1,1),(1,1),(1,1))=0$, whereas $g((2,2),(2,2),(2,2))=1$~\cite{Klyachko_2006}. The triple with all 3 marginals $(\tfrac12,\tfrac12)$ is admissible and corresponds to the GHZ state. So the quantum marginal polytope only sees the asymptotic support of $g$. Deciding $g\neq0$ is NP-hard~\cite{Ikenmeyer_2017}.

The following 2 results play an important role in this article.

\begin{theorem}[First-row theorem; Klemm, Dvir, Clausen--Meier]\label{thm:firstrow}
Let $\mu,\nu\vdash n$ and $\mu\cap\nu:=(\min(\mu_i,\nu_i))_i$.
\begin{enumerate}[label=(\roman*),itemsep=1pt]
\item $\max\{\lambda_1:\ g(\lambda,\mu,\nu)\neq0\}=|\mu\cap\nu|$~\cite{Klemm77,Dvir93,Clausen93}.
\item If $\lambda\vdash n$ with $\lambda_1=|\mu\cap\nu|$, then~\cite{Dvir93}
\begin{align}
    g(\lambda,\mu,\nu)=\bigl\langle\chi^{\mu/(\mu\cap\nu)}\,\chi^{\nu/(\mu\cap\nu)},\,\chi^{\hat\lambda}\bigr\rangle .
\end{align}
\end{enumerate}
\end{theorem}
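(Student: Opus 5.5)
We prove the first-row theorem through the character-theoretic description of $\pi_\mu\otimes\pi_\nu$. Let $m:=\lambda_1$. Since $\pi_{(m)}\otimes\pi_{\hat\lambda}$ is the permutation module $\mathrm{Ind}_{S_m\times S_{n-m}}^{S_n}(\one\otimes\pi_{\hat\lambda})$, Pieri's rule expresses it as $\pi_\lambda$ plus terms $\pi_\kappa$ with $\kappa_1>m$. By Frobenius reciprocity,
\begin{align}
\bigl\langle \chi^\mu\chi^\nu,\ \mathrm{Ind}(\one\otimes\chi^{\hat\lambda})\bigr\rangle
=\bigl\langle \mathrm{Res}^{S_n}_{S_m\times S_{n-m}}(\chi^\mu\chi^\nu),\ \one\otimes\chi^{\hat\lambda}\bigr\rangle .
\end{align}
The restriction of $\chi^\mu$ to $S_m\times S_{n-m}$ is $\sum_{\alpha\subseteq\mu,\,|\alpha|=m}\chi^\alpha\otimes\chi^{\mu/\alpha}$ by the Littlewood--Richardson rule, and similarly for $\nu$. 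Pairing the $S_m$ factor with the trivial character gives $\langle\chi^\alpha\chi^\beta,\one\rangle=\delta_{\alpha\beta}$. The multiplicity above therefore equals
\begin{align}
\sum_{\alpha\subseteq\mu\cap\nu,\ |\alpha|=m}\bigl\langle\chi^{\mu/\alpha}\chi^{\nu/\alpha},\chi^{\hat\lambda}\bigr\rangle .
\end{align}

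For part (i), argue by descending induction on $m$. If $m>|\mu\cap\nu|$, there is no $\alpha\subseteq\mu\cap\nu$ of size $m$, so the sum is empty. Then $\langle\chi^\mu\chi^\nu,\chi^{(m)}\chi^{\hat\lambda}\rangle=0$ for every $\hat\lambda\vdash n-m$. Since $g\ge0$ and $\chi^\lambda$ appears in $\chi^{(m)}\chi^{\hat\lambda}$ with coefficient one, this gives $g(\lambda,\mu,\nu)=0$ whenever $\lambda_1>|\mu\cap\nu|$.

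For part (ii), take $m=|\mu\cap\nu|$. The only admissible $\alpha$ is $\mu\cap\nu$ itself, so the multiplicity is $\langle\chi^{\mu/(\mu\cap\nu)}\chi^{\nu/(\mu\cap\nu)},\chi^{\hat\lambda}\rangle$. The other constituents $\pi_\kappa$ of $\pi_{(m)}\otimes\pi_{\hat\lambda}$ have $\kappa_1>m$, so they contribute $g(\kappa,\mu,\nu)=0$ by part (i). Subtracting them leaves exactly $g(\lambda,\mu,\nu)$, which is the claimed formula.

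For the equality in (i), we must exhibit some $\lambda$ with $\lambda_1=m$ and nonzero multiplicity. The skew characters $\chi^{\mu/(\mu\cap\nu)}$ and $\chi^{\nu/(\mu\cap\nu)}$ are nonzero characters of $S_{n-m}$, so their product is a nonzero character. Hence some $\hat\lambda\vdash n-m$ has positive pairing with it.

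The main obstacle is checking that $\lambda=(m,\hat\lambda)$ is a genuine partition, i.e.\ $\hat\lambda_1\le m$. This should follow from a first-row bound applied to the skew product. The first row of any constituent of $\chi^{\mu/\alpha}\chi^{\nu/\alpha}$ is at most the overlap of the skew shapes' column structures, which is bounded by $m$ because $\alpha=\mu\cap\nu$ already saturates the rows. If that is delicate, we instead choose $\hat\lambda$ lexicographically maximal and use the Pieri argument above to show that no $\kappa$ with $\kappa_1>m$ can absorb it.
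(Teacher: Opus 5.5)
The paper does not prove this statement; it only cites Klemm, Dvir and Clausen--Meier. Your route is essentially Dvir's original argument: Frobenius reciprocity for $\mathrm{Ind}_{S_m\times S_{n-m}}^{S_n}(\mathbf 1\otimes\chi^{\hat\lambda})$, Littlewood--Richardson branching of $\chi^\mu$ and $\chi^\nu$, and the unitriangularity of Pieri's rule. Your sum over $\alpha$ is exactly the right-hand side of Theorem~\ref{thm:dvir}. The upper bound in (i) and the formula in (ii) are correct as you wrote them. No descending induction is needed: the vanishing for $\lambda_1>|\mu\cap\nu|$ is direct, and (ii) uses only that upper bound. One notational point: write $\mathrm{Ind}$ or $\circ$ for the induction product throughout, because in the paper $\otimes$ and juxtaposition of characters denote the inner (Kronecker) product.

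The attainment half of (i) is left unfinished. The ``column structure'' heuristic is not an argument, and lexicographic maximality plays no role. The step closes with tools you already have. Your Frobenius computation never used that $(m,\rho)$ is a partition, so for every $\rho\vdash n-m$, with $m=|\mu\cap\nu|$,
\begin{equation*}
\bigl\langle\chi^\mu\chi^\nu,\ \mathrm{Ind}(\mathbf 1\otimes\chi^{\rho})\bigr\rangle=\bigl\langle\chi^{\mu/(\mu\cap\nu)}\chi^{\nu/(\mu\cap\nu)},\ \chi^{\rho}\bigr\rangle .
\end{equation*}
Every constituent $\kappa$ of the induced character has $\kappa/\rho$ a horizontal strip of size $m$. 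Hence $\kappa_1=m+\sum_{i\ge1}(\rho_i-\kappa_{i+1})\ge m$, with equality only for $\kappa=(m,\rho_1,\rho_2,\dots)$, which requires $\rho_1\le m$.

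Now choose $\rho$ with positive right-hand side; it exists by your nonvanishing argument. All $\kappa$ with $\kappa_1>m$ have $g(\kappa,\mu,\nu)=0$ by the upper bound. So the positive left-hand side forces $\lambda:=(m,\rho)$ to be a constituent, which makes $\lambda$ a partition with $g(\lambda,\mu,\nu)>0$.

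If you prefer the first-row bound on the skew product you had in mind, it is also immediate. The condition $c^\mu_{\alpha\rho}c^\nu_{\alpha\tau}\neq0$ forces $\rho\subseteq\mu$ and $\tau\subseteq\nu$, hence $|\rho\cap\tau|\le|\mu\cap\nu|=m$. The upper bound you already proved, applied in size $n-m$, then bounds the first row of every constituent of $\chi^\rho\chi^\tau$ by $m$.
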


Dvir's recursion expresses Kronecker coefficients through Littlewood--Richardson coefficients.

\begin{theorem}[Dvir's recursion]\label{thm:dvir}
Let $\lambda,\mu,\nu\vdash n$, then~\cite{Dvir93,regev2010kroneckermultiplicitieskellhook}
\begin{align}
    g(\lambda,\mu,\nu)&\le\sum_{\substack{\alpha\vdash\lambda_1\\ \alpha\subseteq\mu\cap\nu}}\bigl\langle\chi^{\mu/\alpha}\chi^{\nu/\alpha},\chi^{\hat\lambda}\bigr\rangle\notag\\
    &=\sum_{\alpha\vdash\lambda_1}\sum_{\rho,\tau\vdash n-\lambda_1} c^\mu_{\alpha\rho}\,c^\nu_{\alpha\tau}\,g(\rho,\hat\lambda,\tau).
\end{align}
The equality follows from the definition of the skew character $\chi^{\nu/\alpha}:=\sum_\rho c^\nu_{\alpha\rho}\chi^\rho$~\cite{Bessenrodt_2014}.
\end{theorem}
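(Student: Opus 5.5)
Dvir's recursion is a cited result, so the proof I would give derives it from the first-row theorem (Theorem~\ref{thm:firstrow}) together with Littlewood--Richardson branching. Write $k:=\lambda_1$ and $m:=n-k$.

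\textbf{Step 1: restrict to the Young subgroup.} Since $\pi_\lambda$ is a constituent of $\operatorname{Ind}_{S_k\times S_m}^{S_n}(\mathbf 1\otimes\pi_{\hat\lambda})$, we can write $\chi^\lambda\le \operatorname{Ind}(\mathbf 1\times\chi^{\hat\lambda})$ coefficientwise. Together with the formula $g(\lambda,\mu,\nu)=\langle\chi^\lambda,\chi^\mu\chi^\nu\rangle$ this gives
\begin{align}
g(\lambda,\mu,\nu)\le\bigl\langle \mathbf 1\times\chi^{\hat\lambda},\,\operatorname{Res}_{S_k\times S_m}(\chi^\mu\chi^\nu)\bigr\rangle .
\end{align}
We then use $\operatorname{Res}\chi^\mu=\sum_{\alpha\vdash k,\rho\vdash m}c^\mu_{\alpha\rho}\chi^\alpha\times\chi^\rho$, and the analogous expansion for $\nu$ with $\beta,\tau$. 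Pairing the $S_k$ factor against the trivial character gives $\langle\chi^\alpha\chi^\beta,\mathbf 1\rangle=\delta_{\alpha\beta}$. What remains is $\sum_{\alpha\vdash k}\sum_{\rho,\tau}c^\mu_{\alpha\rho}c^\nu_{\alpha\tau}\,g(\rho,\hat\lambda,\tau)$, which is the second line of the statement.

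\textbf{Step 2: rewrite as skew characters.} Grouping the sum over $\rho$ into $\chi^{\mu/\alpha}=\sum_\rho c^\mu_{\alpha\rho}\chi^\rho$, and likewise for $\nu$, turns each term into $\langle\chi^{\mu/\alpha}\chi^{\nu/\alpha},\chi^{\hat\lambda}\rangle$.

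\textbf{Step 3: restrict the range of $\alpha$.} The coefficient $c^\mu_{\alpha\rho}$ vanishes unless $\alpha\subseteq\mu$. So only $\alpha\subseteq\mu\cap\nu$ contributes, and the first line equals the second.

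The hardest step is the inequality in Step 1. It relies on the Pieri/Jacobi--Trudi fact that $\pi_\lambda$ appears in $\operatorname{Ind}(\mathbf 1_k\otimes\pi_{\hat\lambda})$ with multiplicity one. That fact holds because $\lambda$ is obtained from $\hat\lambda$ by adding a horizontal strip of length $\lambda_1$ as a new first row. Equality generally fails, because the induced character contains further constituents $\lambda'$; for those, $\lambda'_1\ge\lambda_1$ and Theorem~\ref{thm:firstrow} restricts which of them can contribute. This restriction is consistent with equality holding in the extremal case $\lambda_1=|\mu\cap\nu|$.
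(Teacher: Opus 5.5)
Your derivation is correct and is essentially the mechanism behind the cited result. The paper simply invokes Dvir's exact recursion and drops its subtracted positive term, and that term is precisely the contribution $\sum_{\lambda'\neq\lambda}g(\lambda',\mu,\nu)$ of the other Pieri constituents of $\operatorname{Ind}(\mathbf 1\otimes\pi_{\hat\lambda})$ that you discard in Step~1.

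Two small corrections to your framing. Steps~1--3 never use the first-row theorem; they in fact reprove its upper bound, since the sum is empty when $\lambda_1>|\mu\cap\nu|$. Also, $\lambda/\hat\lambda$ is a horizontal strip with $\lambda_i-\lambda_{i+1}$ boxes in row $i$, not a new first row.
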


The original result by Dvir~\cite{Dvir93} gives an exact recursion, but by omitting the subtraction of a positive term we get the inequality~\cite{regev2010kroneckermultiplicitieskellhook}.

To summarise, Littlewood--Richardson coefficients describe how irreducible representations add, the analogue of $A+B$. Kronecker coefficients describe how an irreducible representation of a composite system splits into local ones, the analogue of $\rho_{AB}\mapsto(\rho_A,\rho_B)$.

\section{Known inequalities for the QMP}\label{sec:review}

In the following we focus on the pure state QMP for 3 parties. 
Let $\psi\in\mathcal H_A\otimes\mathcal H_B\otimes\mathcal H_C$ be a pure state with $\|\psi\|^2=m$ of arbitrary finite local dimensions. The local spectra $a,b,c$ are ordered decreasingly, i.e. $x_1\ge x_2\ge\dots\ge0$ for $x=a,b,c$.
For fixed local dimensions $d_A$, $d_B$ and $d_C$, the set of admissible spectral triples is a convex polytope, which is called the marginal polytope~\cite{klyachko2004quantummarginalproblemrepresentations,Christandl_2007}.

For 3 qubits the admissible local spectra are characterized by the polygon inequality
\begin{align}
    a_2\le b_2+c_2\label{qubit1}
\end{align}
and permutations of the parties~\cite{Higuchi_2003,bravyi2003requirementscompatibilitylocalmultipartite}. 

For 3 qutrits, Higuchi~\cite{higuchi2003oneparticlereduceddensitymatrices} proved that the following 7 inequalities, plus permutations of the party labels, are necessary and sufficient. With decreasing order of the spectra they read
\begin{align}
a_2+a_3&\le b_2+b_3+c_2+c_3,\label{H1}\\ 
a_1+a_3&\le b_2+b_3+c_1+c_3,\label{H2}\\
a_1+a_2&\le b_2+b_3+c_1+c_2,\label{H3}\\
2a_2+a_3&\le 2b_2+b_3+2c_2+c_3,\label{H4}\\
2a_3+a_2&\le 2b_2+b_3+2c_3+c_2,\label{H5}\\
2a_2+a_1&\le 2b_2+b_3+2c_2+c_1,\label{H6}\\
2a_2+a_1&\le 2b_3+b_2+2c_1+c_2.\label{H7}
\end{align}
These inequalities define a polytope with 33 vertices and 45 facets, which had been obtained earlier by Franz \cite[Prop.~5.1]{Franz2002} via geometric invariant theory and a computer check. Higuchi proves Eq.~\eqref{H1} via a polygon-type theorem, Eqn.~\eqref{H2}-\eqref{H3} algebraically, and Eqn.~\eqref{H4}-\eqref{H7} variationally~\cite{higuchi2003oneparticlereduceddensitymatrices}.

In every dimension , the quantum marginal polytope is described by finitely many inequalities from Schubert calculus \cite{klyachko2004quantummarginalproblemrepresentations}. Explicit minimal lists exist only for small formats. Vergne--Walter \cite{Vergne_2017} give $4\times4\times4$: 270 facets in 50 classes under permutations of the parties, and 328 extreme rays in 65 classes. Bulois--Denis--Ressayre \cite{bulois2025algorithmcomputekroneckercone} computed the Kronecker cones $\mathrm{Kron}(d,d,d)$ for $d\le7$: up to $S_3$ and excluding the ordering (dominance) inequalities, they find 47, 462, 5\,333 and 64\,792 classes for $d=4,5,6,7$.

Aside from these lists of inequalities for fixed dimensions, there exist few results valid in all dimensions.
Klyachko's maximal-eigenvalue theorem \cite{klyachko2004quantummarginalproblemrepresentations} contains the inequality 
\begin{align}
    a_1\le m + b_i - c_i\label{K1}
\end{align}
valid for every $i$ for all local dimensions.
Other than this, Ressayre~\cite{ressayre2019horninequalitiesnonzerokronecker} proves a general family of inequalities that is referred to at a later point in the article.

\section{General families of inequalities}\label{sec:general inequalities}

At first sight, inequality~\eqref{qubit1} appears different from the inequalities~\eqref{H1}-\eqref{H7}. However, after a little reformulation, they can be brought to the same form. Using $x_2=m-x_1$, Eq.~\eqref{qubit1} can be rewritten in the form of Eq.~\eqref{K1}. Similarly, Eqn.~\eqref{H1}-\eqref{H3} can be rewritten in the same form, for $i=1,2,3$.
These inequalities, proven explicitly for dimension $d=2$~\cite{Higuchi_2003} and $d=3$~\cite{higuchi2003oneparticlereduceddensitymatrices}, follow from the general inequality~\eqref{K1} that is valid in every dimension.

\begin{lemma}[Klyachko~{\cite[Thm.~6.3.1]{klyachko2004quantummarginalproblemrepresentations}}]\label{lem:Klyachko}
Let $\psi$ be a pure state with $\|\psi\|^2=m$ and marginal spectra $(a,b,c)$. Then
\begin{align}\label{eq:first_row}
    a_1\le\sum_i\min(b_i,c_i).   
\end{align}
As a direct consequence,
\begin{align}\label{eq:seed_1}
    a_1\le m+b_I-c_I
\end{align}
for every finite $I\subset\N$, where $x_I=\sum_{i\in I}x_i$.
\end{lemma}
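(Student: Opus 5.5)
The plan is to reduce the main inequality~\eqref{eq:first_row} to the first-row theorem through the representation-theoretic description of admissible spectra, and then to obtain~\eqref{eq:seed_1} from it by an elementary estimate.

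\textbf{Reduction for~\eqref{eq:first_row}.} Both sides of~\eqref{eq:first_row} are homogeneous of degree one in $m$ and continuous in the spectra. It therefore suffices to prove it for normalized rational admissible triples and then pass to limits.

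By Theorem~\ref{thm:qmp-kron}(ii), every admissible triple is a limit of normalized triples $(\lambda,\mu,\nu)/n$ with $g(\lambda,\mu,\nu)\neq0$, where $\lambda$ is assigned to party $A$, $\mu$ to $B$ and $\nu$ to $C$. For such a triple, Theorem~\ref{thm:firstrow}(i) gives
\begin{align}
\lambda_1\le|\mu\cap\nu|=\sum_i\min(\mu_i,\nu_i).
\end{align}
Dividing by $n$ yields $a_1\le\sum_i\min(b_i,c_i)$ at these rational points.

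The right-hand side is a finite sum of minima of coordinates, because the local dimensions are fixed. Hence it is continuous, and the inequality survives the limit. Rescaling by $m$ then gives the statement for unnormalized $\psi$.

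\textbf{Main obstacle.} The only delicate point is the approximation step. We must make sure that the sequence of Kronecker triples $(\lambda,\mu,\nu)/n$ converges coordinatewise to the ordered spectra $(a,b,c)$, with the zero-padding conventions consistent on both sides.

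Theorem~\ref{thm:qmp-kron}(ii) supplies exactly this convergence, so the issue reduces to bookkeeping. An alternative route avoids it: use (ii) to treat rational triples directly as $(\lambda,\mu,\nu)/n$, and note that these rational points are dense in the polytope.

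\textbf{Consequence~\eqref{eq:seed_1}.} Fix a finite $I\subset\N$. For $i\in I$ use $\min(b_i,c_i)\le b_i$, and for $i\notin I$ use $\min(b_i,c_i)\le c_i$. This gives
\begin{align}
a_1\le b_I+\sum_{i\notin I}c_i=b_I+m-c_I,
\end{align}
where the last equality uses $\sum_i c_i=\Tr\rho_C=\|\psi\|^2=m$.
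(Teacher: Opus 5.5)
Your proof is correct, but it takes a different route from the paper's own proof. Your route is Klyachko's original argument, which the paper only mentions as an alternative after its proof.

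The paper argues directly on the state. Since $a_1$ is the top eigenvalue of $\rho_{BC}$, with eigenvector $\chi$, one has $\rho_{BC}\ge a_1\proj{\chi}$. Taking partial traces gives $\rho_B\ge a_1\Tr_C\proj{\chi}$ and $\rho_C\ge a_1\Tr_B\proj{\chi}$, and these two reduced operators share a single spectrum $s$ with $\sum_i s_i=1$. Monotonicity (Lemma~\ref{lem:Weyl}(i)) then gives $\min(b_i,c_i)\ge a_1 s_i$, and summing over $i$ yields~\eqref{eq:first_row}.

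That argument is elementary and exact in every finite dimension. It needs no Kronecker coefficients, no asymptotics and no approximation or continuity step. It also yields the stronger intermediate fact $b_i,c_i\ge a_1 s_i$ with one common $s$. This is precisely the structure ($\spec P=\spec P'$) that the peeling Lemma~\ref{lem:peeling} and Theorem~\ref{thm:lift} build on, so it generalises directly.

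Your route instead imports two deep results: the inequality half of the first-row theorem, and direction (ii) of the spectrum--Kronecker correspondence. It then needs a limiting argument, which you handle correctly via bounded row lengths at fixed local dimensions. In exchange it makes transparent that~\eqref{eq:first_row} is literally the normalized bound $\lambda_1\le|\mu\cap\nu|$.

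The implication also runs the other way. The paper's elementary argument combined with Theorem~\ref{thm:qmp-kron}(i) recovers $\lambda_1\le|\mu\cap\nu|$ whenever $g(\lambda,\mu,\nu)\neq0$. So the paper's direction is the more economical one.

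Your derivation of~\eqref{eq:seed_1} from~\eqref{eq:first_row} coincides with the paper's.
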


Since its proof is insightful we include it.

\begin{proof}
The non-zero spectra of $\rho_A$ and $\rho_{BC}$ coincide, so $a_1$ is the largest eigenvalue of $\rho_{BC}$. Let $\chi$ be a corresponding unit eigenvector, so $\rho_{BC}\ge a_1\proj\chi$. Taking partial traces gives $\rho_B\ge a_1\,\Tr_C\proj\chi$ and $\rho_C\ge a_1\,\Tr_B\proj\chi$. Both operators $Tr_C\proj\chi$ and $Tr_B\proj\chi$ have the same spectrum $s$ with $\sum_i s_i=1$. By Weyl monotonicity in Lemma~\ref{lem:Weyl}, $b_i\ge a_1 s_i$ and $c_i\ge a_1 s_i$. Summing $\min(b_i,c_i)\ge a_1 s_i$ over $i$ gives the claim. Bound $\min(b_i,c_i)$ by $b_i$ for $i\in I$ and by $c_i$ otherwise and $a_1\le b_I+m-c_I$ follows.
\end{proof}

Alternatively, Klyachko's proof of Thm.~6.3.1~\cite{klyachko2004quantummarginalproblemrepresentations} is based on the first-row Theorem~\ref{thm:firstrow} by Klemm, Dvir and Clausen--Meier.
It states that the maximal first row among the constituents of $\pi_\mu\otimes\pi_\nu$ is $|\mu\cap\nu|=\sum_i\min(\mu_i,\nu_i)$. Hence
\begin{equation}
g(\mu,\lambda,\nu)\ne0\ \Longrightarrow\ \lambda_1\le\sum_i\min(\mu_i,\nu_i) .
\end{equation}
By the Kronecker--spectrum correspondence Theorem~\ref{thm:qmp-kron} this is Lemma~\ref{lem:Klyachko}.
For $i=1$, Eq.~\eqref{K1} is Murnaghan's depth triangle~\cite{klyachko2004quantummarginalproblemrepresentations,ressayre2019horninequalitiesnonzerokronecker}.

Let us now turn to the remaining Higuchi inequalities. The inequalities~\eqref{H4}-\eqref{H6} can be reformulated to
\begin{align}
    a_1-a_2\le m-b_i+c_i+b_j-c_j\label{G2}
\end{align}
for all $i,j$ and finally inequality~\eqref{H7} becomes
\begin{align}
    a_1-a_{i+1}\le m-b_{i+j-1}+b_k+c_j-c_k\label{G3}
\end{align}
for all $i,j,k$.
Somewhat surprisingly, in this form these inequalities also hold in dimension 4 and 5. This is also the reason behind choosing this specific notation for Eq.~\eqref{G3}. We now introduce our first result.

\begin{theorem}\label{thm:G23}
Eq.~\eqref{G2} and Eq.~\eqref{G3} hold for pure states of arbitrary local dimensions, for every permutations of the party labels. For partitions with $g\neq0$ they hold with $m$ replaced by $n$.
\end{theorem}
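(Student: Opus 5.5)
The approach is representation-theoretic. I would first prove both inequalities for partitions $\lambda,\mu,\nu\vdash n$ with $g(\lambda,\mu,\nu)\neq0$, where $(a,b,c)$ is replaced by $(\lambda,\mu,\nu)$ and $m$ by $n$. The statement for states then follows from Theorem~\ref{thm:qmp-kron}: every admissible triple is a limit of normalized triples $(\lambda,\mu,\nu)/n$ with $g\neq0$, and both inequalities are linear and homogeneous in $(a,b,c,m)$, so they survive rescaling and limits. Since $g$ is symmetric in its three arguments, proving the partition statement for the ordering $(\lambda,\mu,\nu)$ gives it for every permutation of the party labels.

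The key tool is Dvir's recursion, Theorem~\ref{thm:dvir}, used only as a positivity statement. If $g(\lambda,\mu,\nu)\neq0$, then some term on the right-hand side is nonzero. So there exist $\alpha\vdash\lambda_1$ with $\alpha\subseteq\mu\cap\nu$, and $\rho,\tau\vdash n-\lambda_1$, such that
\begin{align}
c^\mu_{\alpha\rho}\neq0,\qquad c^\nu_{\alpha\tau}\neq0,\qquad g(\rho,\hat\lambda,\tau)\neq0 .
\end{align}
By Theorem~\ref{thm:horn}(ii), the first two conditions let me apply Lemma~\ref{lem:Weyl} to $\mu\sim\alpha+\rho$ and $\nu\sim\alpha+\tau$. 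The third lets me apply Klyachko's inequality, Lemma~\ref{lem:Klyachko}, to the smaller triple $(\rho,\hat\lambda,\tau)$ of total size $n-\lambda_1$, in the partition form coming from the first-row theorem.

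For Eq.~\eqref{G2}, I rewrite the target as $\mu_i+\nu_j\le n-\lambda_1+\lambda_2+\nu_i+\mu_j$ and chain three estimates.
\begin{itemize}
\item Weyl's inequality with index $1$ for the $\rho$ (resp.\ $\tau$) factor gives $\mu_i\le\alpha_i+\rho_1$ and $\nu_j\le\alpha_j+\tau_1$.
\item The containment $\alpha\subseteq\mu\cap\nu$ gives $\alpha_i\le\nu_i$ and $\alpha_j\le\mu_j$. This also covers the case $i=j$.
\item Eq.~\eqref{eq:seed_1} with $I=\{1\}$, applied to $g(\rho,\hat\lambda,\tau)\neq0$ with $\rho$ in the first slot, gives $\rho_1\le(n-\lambda_1)+\lambda_2-\tau_1$.
\end{itemize}
Adding these yields exactly Eq.~\eqref{G2}.

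For Eq.~\eqref{G3}, the target is $\mu_{i+j-1}+\nu_k\le n-\lambda_1+\lambda_{i+1}+\nu_j+\mu_k$, and the chain is analogous.
\begin{itemize}
\item Weyl gives $\mu_{i+j-1}\le\alpha_j+\rho_i$ and $\nu_k\le\alpha_k+\tau_1$.
\item Containment gives $\alpha_j\le\nu_j$ and $\alpha_k\le\mu_k$.
\item Klyachko's inequality with $\tau$ in the maximal-eigenvalue slot and $I=\{i\}$ gives $\tau_1\le(n-\lambda_1)+\hat\lambda_i-\rho_i$, where $\hat\lambda_i=\lambda_{i+1}$.
\end{itemize}
Summing again gives the claim.

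The step I expect to need the most care is the bookkeeping, not any single estimate. First, I must check that Dvir's inequality really yields a nonzero term with all three factors nonzero. Second, I must check that Lemma~\ref{lem:Klyachko} is applied with the correct normalization $n-\lambda_1$ and the correct choice of which of $\rho,\tau$ plays the role of the first party. This second check is what makes the indices $\lambda_2$ and $\lambda_{i+1}$ (via $\hat\lambda$) appear in the right places.
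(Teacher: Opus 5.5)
Your proof is correct. The chain of estimates is term-for-term the paper's, but you run it on the Kronecker side of the correspondence, where the paper runs it on the spectral side.

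The paper does not prove Theorem~\ref{thm:G23} separately. It reads both inequalities off Eq.~\eqref{GH}: Eq.~\eqref{G2} is the case $J=\{1\}$, $I=K=\{i\}$, $l=j$, and Eq.~\eqref{G3} is the case $J=\{i\}$, $I=\{j\}$, $K=\{i+j-1\}$, $l=k$. Eq.~\eqref{GH} is a lift of the first-row seed via Theorem~\ref{thm:lift}. That lift peels the top Schmidt component of an actual state (Lemma~\ref{lem:peeling}), applies Weyl's inequality and monotonicity to $\rho_B=P+\rho'_B$ and $\rho_C=P'+\rho'_C$, and uses the seed on the child. The partition statement then comes from the direction $g\neq0\Rightarrow$ admissible, Theorem~\ref{thm:qmp-kron}(i).

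You replace the peeling step by a nonzero term of Dvir's recursion, and each ingredient has a counterpart:
\begin{itemize}
\item your $\alpha$ plays the role of the common spectrum of $P,P'$;
\item your $(\rho,\tau)$ play the role of the child spectra $(\beta,\gamma)$;
\item $\alpha\subseteq\mu\cap\nu$ replaces $b_L\ge\tau_L$ and $c_{L'}\ge\tau_{L'}$;
\item the first-row theorem replaces Klyachko's eigenvector argument.
\end{itemize}
This is exactly the correspondence the paper means when it calls Lemma~\ref{lem:peeling} the spectral counterpart of a Dvir step.

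What each route buys:
\begin{itemize}
\item Your route proves the Kronecker statement purely combinatorially, without the deep fact that $g\neq0$ triples are admissible. The price is that the state statement then needs the approximation theorem, Theorem~\ref{thm:qmp-kron}(ii), plus a limiting argument.
\item The paper's proof on states is direct and non-asymptotic, using only Schmidt decomposition and Weyl's inequality. Packaging it as Theorem~\ref{thm:lift} also gives the whole family and its iteration at once.
\end{itemize}
Each route thus uses exactly one deep transfer theorem, in opposite directions.

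A minor point: for your Weyl and containment steps you only need the direction $c^\mu_{\alpha\rho}\neq0\Rightarrow(\alpha,\rho,\mu)$ occurs in Theorem~\ref{thm:horn}(ii). That direction does not rely on saturation, and it can also be replaced by the standard combinatorial properties of LR coefficients.
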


We omit the proof for now and will see later, that Theorem~\ref{thm:G23} follows from a more general result.
For $d=3$, Eqn.~\eqref{K1},~\eqref{G2} and~\eqref{G3} is Franz's list \cite{Franz2002} and Higuchi's \cite{higuchi2003oneparticlereduceddensitymatrices}. For all $d$, Eq.~\eqref{G2} with $i=1$ follows from Ressayre's Theorem~2~\cite{ressayre2019horninequalitiesnonzerokronecker}.
To the best of our knowledge, Eq.~\eqref{G2} for $i\ge2$ and Eq.~\eqref{G3} for general dimension is a new result.

\section{A new family of inequalities}\label{sec:GHGK}

Before stating our main contribution, we first adapt a result from~\cite{Morelli_2024}.

\begin{lemma}[Siewert's recursion]\label{lem:peeling}
Let $\psi\in\mathcal H_A\otimes\mathcal H_B\otimes\mathcal H_C$, $\|\psi\|^2=m$, and let $a=\spec\rho_A$. Let
\begin{align}
    \psi=\sum_j\sqrt{a_j}\,e_j\otimes\varphi_j
\end{align}
be the Schmidt-decomposition across $A|BC$. Put
\begin{align}
    P&=a_1\Tr_C\proj{\varphi_1}\\
    P'&=a_1\Tr_B\proj{\varphi_1}\\
    \psi'&=\sum_{j\ge2}\sqrt{a_j}\,e_j\otimes\varphi_j
\end{align}
Then $\rho_B=P+\rho'_B$ and $\rho_C=P'+\rho'_C$, where $\rho'$ denotes the marginals of $\psi'$. Moreover, $P,P'\ge0$ have the same spectrum $\alpha$ with $|\alpha|=a_1$, and $\psi'$ has $\|\psi'\|^2=m-a_1$ and $A$-spectrum $\hat a$.
\end{lemma}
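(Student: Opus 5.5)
This is a direct computation from the Schmidt decomposition. I will verify each claim in turn, using only the orthonormality of the Schmidt vectors $e_j$ and $\varphi_j$.

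\emph{Splitting $\rho_{BC}$.} Tracing out $A$ from $\proj{\psi}$ gives
\begin{align}
\rho_{BC}=\sum_{j,k}\sqrt{a_ja_k}\,\langle e_k|e_j\rangle\,|\varphi_j\rangle\langle\varphi_k|=\sum_j a_j\proj{\varphi_j}.
\end{align}
The cross terms vanish by orthonormality of the $e_j$. Split off the $j=1$ term:
\begin{align}
\rho_{BC}=a_1\proj{\varphi_1}+\sum_{j\ge2}a_j\proj{\varphi_j}.
\end{align}
The same computation applied to $\psi'$ shows that its $BC$-marginal is exactly the second sum, $\rho'_{BC}=\sum_{j\ge2}a_j\proj{\varphi_j}$.

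\emph{Marginals of $B$ and $C$.} Partial traces are linear, so applying $\Tr_C$ and $\Tr_B$ to this splitting gives $\rho_B=P+\rho'_B$ and $\rho_C=P'+\rho'_C$.

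\emph{Properties of $P$ and $P'$.} Both are partial traces of the positive operator $a_1\proj{\varphi_1}$, hence positive. Since $\varphi_1$ is a pure state on $BC$, its two reductions have the same nonzero spectrum by the Schmidt decomposition of $\varphi_1$ across $B|C$. So $P$ and $P'$ share a spectrum $\alpha$ (zero-padded to a common length), with $|\alpha|=\Tr P=a_1\|\varphi_1\|^2=a_1$.

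\emph{Properties of $\psi'$.} Orthonormality gives $\|\psi'\|^2=\sum_{j\ge2}a_j=m-a_1$. The $A$-marginal of $\psi'$ is $\sum_{j\ge2}a_j\proj{e_j}$, whose spectrum in decreasing order is $(a_2,a_3,\dots)=\hat a$.

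There is no real obstacle here. The one point needing care is that the $\varphi_j$ must be orthonormal, not merely normalized, so that the cross terms in $\rho_{BC}$ vanish; this is exactly what the Schmidt decomposition provides. Degenerate $a_1$ causes no difficulty, since any choice of Schmidt basis works.
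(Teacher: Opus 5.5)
Your proof is correct and follows the same route as the paper: orthonormality of the $e_j$ kills the cross terms, so $\rho_{BC}=\sum_j a_j\proj{\varphi_j}$ splits into the $j=1$ term plus the $BC$-marginal of $\psi'$, and $P,P'$ are the two reductions of the pure vector $\sqrt{a_1}\varphi_1$, hence isospectral. One small correction to your closing remark: the cross terms in $\rho_{BC}$ vanish because the $e_j$ are orthonormal, as your own first computation shows, whereas the orthonormality of the $\varphi_j$ is what you actually need for the $A$-marginal of $\psi'$ to be $\sum_{j\ge2}a_j\proj{e_j}$.
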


\begin{proof}
Since $\langle e_{j'}|e_j\rangle=\delta_{jj'}$, $\rho_B=\sum_ja_j\Tr_C\proj{\varphi_j}$, and the terms with $j\ge2$ form the $B$-marginal of $\psi'$ and likewise for $C$. $P$ and $P'$ are the 2 marginals of the bipartite pure vector $\sqrt{a_1}\varphi_1$, so they have equal spectra.
\end{proof}

We call $\psi'$ the child and write $\beta=\spec\rho'_B$, $\gamma=\spec\rho'_C$. So the child has spectra $(\hat a,\beta,\gamma)$ and mass $m'=m-a_1$. It is the spectral counterpart of one step of Dvir's recursion Theorem~\ref{thm:dvir}.

Let us now introduce the main result.
For a finite multiset $L$ of positive integers, let $\one_L\in\mathbb Z^{\N}$ be its counting vector, so that $v_L=\langle\one_L,v\rangle$ and write $\one_i:=\one_{\{i\}}$. For $u\in\mathbb Z_{\ge0}^{\N}$ with finite support, the level sets $J_r(u):=\{i:u_i\ge r\}$, $1\le r\le\max u$, satisfy $J_1(u)\supseteq J_2(u)\supseteq\cdots$ and $u=\sum_r\one_{J_r(u)}$. A seed is a linear inequality
\begin{equation}
    \langle u,\rho\rangle+\langle v,\tau\rangle\ \le\ \kappa\,m+\langle w,\eta\rangle ,
\end{equation}
with finitely supported $u,v,w$ and a constant $\kappa$, that holds for all pure states, where $m$ is the mass and $(\rho,\eta,\tau)$ are the spectra. The first-row bound of Eq.~\ref{eq:seed_1} in Lemma~\ref{lem:Klyachko} is the basic example. The theorem below peels $a$, bounds the $b$- and $c$-terms by Horn inequalities whose $J$-sets add up to $u$ and $v$, and applies the seed to the child $(\hat a,\beta,\gamma)$.
 
\begin{theorem}[lifting a seed]\label{thm:lift}
Let
\begin{equation}\label{eq:seed}
    \langle u,\beta\rangle+\langle v,\gamma\rangle\ \le\ \kappa\,(m-a_1)+\langle w,\hat a\rangle ,
\end{equation}
be a seed with $u,v\ge0$ integral. Write $u=\sum_{r=1}^{h}\one_{J_r}$ and $v=\sum_{r=1}^{h'}\one_{J'_r}$ with finite sets $J_r,J'_r$ and let $(I_r,J_r,K_r)$ and $(I'_r,J'_r,K'_r)$ be Horn triples. Let $L,L'$ be finite multisets and $\bar\kappa\ge0$ such that
\begin{align}
    \delta=\sum_{r=1}^{h}\one_{I_r}+\sum_{r=1}^{h'}\one_{I'_r}-\one_L-\one_{L'}
\end{align}
satisfies $\sum_{i=1}^k \delta_i\le \bar\kappa\,k$ for all $k\ge1$.
Then, for all pure states and with $m$ and local spectra $(a,b,c)$ as in Section~\ref{sec:review},
\begin{equation}\label{eq:FL}
(\kappa-\bar\kappa)\,a_1-\langle w,\hat a\rangle+\sum_{r=1}^{h}b_{K_r}-b_L+\sum_{r=1}^{h'}c_{K'_r}-c_{L'}\ \le\ \kappa\,m.
\end{equation}
\end{theorem}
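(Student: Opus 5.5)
The plan is to run one step of Siewert's recursion (Lemma~\ref{lem:peeling}), bound every $b$- and $c$-term by the corresponding spectra of the peeled-off piece and of the child, and then apply the seed~\eqref{eq:seed} to the child. Concretely, take the Schmidt decomposition across $A|BC$ and write $\rho_B=P+\rho'_B$ and $\rho_C=P'+\rho'_C$. Here $P,P'\ge0$ share the spectrum $\alpha$ with $|\alpha|=a_1$, and the child $\psi'$ has spectra $(\hat a,\beta,\gamma)$ and mass $m-a_1$. All spectra are taken decreasing and zero-padded.

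\emph{Step 1 (Horn).} For each $r$, the triple $(I_r,J_r,K_r)$ is a Horn triple, so Theorem~\ref{thm:horn}(i) applied to $\rho_B=P+\rho'_B$ gives $b_{K_r}\le \alpha_{I_r}+\beta_{J_r}$. Likewise $\rho_C=P'+\rho'_C$ gives $c_{K'_r}\le\alpha_{I'_r}+\gamma_{J'_r}$, using that $P'$ has the same spectrum $\alpha$.

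To justify this in arbitrary dimension, I would embed all operators into $\mathbb C^d$ with $d$ larger than every index occurring and larger than $r$. Zero-padding does not change the spectra, and the Horn condition $c^{\lambda(K)}_{\lambda(I)\lambda(J)}\neq0$ is independent of $d$, so the inequality holds. Summing over $r$ and using $u=\sum_r\one_{J_r}$, $v=\sum_r\one_{J'_r}$ yields
\begin{equation*}
\sum_r b_{K_r}+\sum_r c_{K'_r}\le \Bigl\langle \sum_r\one_{I_r}+\sum_r\one_{I'_r},\alpha\Bigr\rangle+\langle u,\beta\rangle+\langle v,\gamma\rangle .
\end{equation*}

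\emph{Step 2 (monotonicity).} Lemma~\ref{lem:Weyl}(i) gives $b_i\ge\alpha_i$ and $c_i\ge\alpha_i$ for every $i$. Hence $-b_L\le-\alpha_L$ and $-c_{L'}\le-\alpha_{L'}$, counting multiplicities in the multisets. Combining this with Step 1, the $b$- and $c$-part of the left side of~\eqref{eq:FL} is at most $\langle\delta,\alpha\rangle+\langle u,\beta\rangle+\langle v,\gamma\rangle$.

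\emph{Step 3 (Abel summation).} Put $D_k=\sum_{i\le k}\delta_i$. Since $\alpha$ is decreasing, nonnegative and finitely supported, summation by parts gives
\begin{equation*}
\langle\delta,\alpha\rangle=\sum_k(\alpha_k-\alpha_{k+1})D_k\le\bar\kappa\sum_k k(\alpha_k-\alpha_{k+1})=\bar\kappa|\alpha|=\bar\kappa a_1 .
\end{equation*}
The inequality uses $\alpha_k-\alpha_{k+1}\ge0$ together with the hypothesis $D_k\le\bar\kappa k$.

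\emph{Step 4 (seed).} Applying the seed~\eqref{eq:seed} to the child bounds $\langle u,\beta\rangle+\langle v,\gamma\rangle$ by $\kappa(m-a_1)+\langle w,\hat a\rangle$. Altogether,
\begin{equation*}
\sum_r b_{K_r}-b_L+\sum_r c_{K'_r}-c_{L'}\le\bar\kappa a_1+\kappa m-\kappa a_1+\langle w,\hat a\rangle .
\end{equation*}
Rearranging gives exactly~\eqref{eq:FL}.

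The only delicate point is Step~1: checking that Horn's inequalities apply to the positive semidefinite zero-padded operators with arbitrary index sets (possibly $r\ge$ the actual local dimension). The dimension-free nature of Horn triples and the embedding into a large enough space should resolve this. The remaining steps are bookkeeping.
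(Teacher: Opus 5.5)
Your proposal is correct and follows the paper's proof step for step: you peel off $a_1$ via Siewert's recursion, bound $\sum_r b_{K_r}$ and $\sum_r c_{K'_r}$ by Horn's inequalities for $P+\rho'_B$ and $P'+\rho'_C$, control $b_L$ and $c_{L'}$ by monotonicity, bound $\langle\delta,\alpha\rangle\le\bar\kappa a_1$ by Abel summation, and apply the seed to the child. Your remark that positivity allows zero-padding into $\mathbb C^d$ with $d$ exceeding all occurring indices, so that the $d$-independent Horn condition applies, makes explicit a point the paper leaves implicit.
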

 
\begin{proof}
Peel $a$ according to Lemma~\ref{lem:peeling}, then $b=\spec(P+\rho'_B)$ and $c=\spec(P'+\rho'_C)$ with $\spec P=\spec P'=\tau$ and $|\tau|=a_1$. The child has spectra $(\hat a,\beta,\gamma)$ and mass $m-a_1$.
By Theorem~\ref{thm:horn}, $b_{K_r}\le\tau_{I_r}+\beta_{J_r}$ and $c_{K'_r}\le\tau_{I'_r}+\gamma_{J'_r}$ for each $r$. Summing gives
\begin{align}
    \sum_r b_{K_r}&\le\sum_r\tau_{I_r}+\langle u,\beta\rangle\\
    \sum_r c_{K'_r}&\le\sum_r\tau_{I'_r}+\langle v,\gamma\rangle\ .
\end{align}
By monotonicity in Lemma~\ref{lem:Weyl}, $b_L\ge\tau_L$ and $c_{L'}\ge\tau_{L'}$.
So the $b$- and $c$-terms of Eq.~\eqref{eq:FL} are at most
\begin{align}
    \langle \delta,\tau\rangle+\langle u,\beta\rangle+\langle v,\gamma\rangle.
\end{align}
Since $\tau$ is decreasing with finite support, Abel summation gives
\begin{align}
    \langle \delta,\tau\rangle&=\sum_{k\ge1}(\tau_k-\tau_{k+1})\sum_{j=1}^k w_{j}\notag\\
    &\le \bar\kappa\sum_{k\ge1}k\,(\tau_k-\tau_{k+1})=\bar\kappa|\tau|=\bar\kappa\,a_1 .
\end{align}
Finally, the seed for the child gives
\begin{align}
    \langle u,\beta\rangle+\langle v,\gamma\rangle\le \kappa\,(m-a_1)+\langle w,\hat a\rangle.
\end{align}
Adding up proves Eq.~\eqref{eq:FL}.
\end{proof}
 
Every inequality given by Theorem~\ref{thm:lift} holds in every dimension and, by Theorem~\ref{thm:qmp-kron} (i), for all $\lambda,\mu,\nu\vdash n$ with $g(\lambda,\mu,\nu)\ne0$ and $m\rightarrow n$ in the Kronecker version. A lift can serve as a seed again. Lifting Eq.~\eqref{eq:seed_1} gives one-step inequalities such as the next corollary. Recursively, these can be used to obtain two-step inequalities and so on.

Suppose the $J_r$ are the level sets $J_r(u)$, $I_1\supseteq I_2\supseteq\cdots$ and $K_1\supseteq K_2\supseteq\cdots$. Then the triples $(I_r,J_r,K_r)_{r\le h}$ form Horn triples of length $h=\max u$ and likewise for $v$. 

\begin{corollary}\label{thm:GHGK}
Let $\psi$ be a pure state of 3 parties of arbitrary finite dimensions with $\|\psi\|^2=m$, and let $(a,b,c)$ be its decreasing marginal spectra. For every Horn triple $(I,J,K)$ and for every $l$ and $j\notin I$
\begin{align}
a_1-a_{J+1}+b_K-b_l-c_I+c_l\ &\le m \label{GH}\\
-a_{J+1}+b_j+c_K&\le m\ .\label{GK}
\end{align}
They hold for all partitions $\lambda,\mu,\nu\vdash m$ with $g\ne0$. 
\end{corollary}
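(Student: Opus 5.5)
The plan is to obtain both inequalities as one-step lifts via Theorem~\ref{thm:lift}, starting from a seed that is a permuted version of Lemma~\ref{lem:Klyachko}. Throughout, $a_{J+1}$ means $\sum_{j\in J}a_{j+1}$, which equals $\hat a_J$. So the term $-a_{J+1}$ in Eqs.~\eqref{GH} and~\eqref{GK} is precisely the contribution $-\langle w,\hat a\rangle$ with $w=\one_J$ in Eq.~\eqref{eq:FL}. The work is therefore to find, for each inequality, a seed for the child $(\hat a,\beta,\gamma)$ of mass $m'=m-a_1$ in which $\hat a_J$ appears on the right-hand side, and whose $\beta$/$\gamma$ coefficient vectors are indicators of single sets. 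Then $h,h'\le1$ and Theorem~\ref{thm:lift} needs only one Horn triple per party.

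For Eq.~\eqref{GK}, I would use the child seed $\gamma_J\le m'+\hat a_J-\beta_J$ (or a variant of it). This is a set-sum version of Eq.~\eqref{eq:seed_1} in which the role of the ``first-row'' party is moved onto a complementary mass term. I would try to derive it from Eq.~\eqref{eq:first_row} applied to the child with the parties permuted, bounding each $\min$ term either by the $\hat a$ entry or by the $\beta$ entry, exactly as in the proof of Lemma~\ref{lem:Klyachko}. Given such a seed with $v=\one_J$, $u=\one_{\{j\}}$-type terms and $\kappa=1$, lifting works as follows:
\begin{enumerate}[label=(\roman*),leftmargin=*,itemsep=1pt]
\item The $c$-part is lifted with the Horn triple $(I,J,K)$, producing $c_K$ and $\tau_I$.
\item The single $b$-index is lifted with the Pieri triple $(\{1\},\{j\},\{j\})$ or with monotonicity.
\item The multiset $L$ is chosen so that $\delta=\one_I-\one_L$ plus the Pieri contribution has partial sums $\sum_{i\le k}\delta_i\le k$. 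This gives $\bar\kappa=1$, and the $a_1$ term in Eq.~\eqref{eq:FL} cancels, leaving exactly Eq.~\eqref{GK}.
\end{enumerate}
The condition $j\notin I$ is what I expect to make this partial-sum condition hold, because it prevents $\delta$ from having a coefficient $2$ at an early index.

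For Eq.~\eqref{GH}, I would use the same seed type but keep $\bar\kappa=0$, so that the term $a_1$ survives with coefficient $\kappa=1$. Here the $b$-side is lifted by $(I,J,K)$, giving $b_K\le\tau_I+\beta_J$. The subtracted terms $-b_l$ and $+c_l$, together with $-c_I$, are then handled by monotonicity (Lemma~\ref{lem:Weyl}(i)) on the $c$-side: $c_I\ge\tau_I$ absorbs $\tau_I$ exactly, and the pair $(b_l,c_l)$ is balanced through $\spec P=\spec P'=\tau$. The goal is to make $\delta\le0$ entrywise, so that the Abel summation bound holds with $\bar\kappa=0$. Once the real-valued inequalities hold, the Kronecker statement follows directly from Theorem~\ref{thm:qmp-kron}(i) with $m\to n$.

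The main obstacle is the seed itself. It is not obvious that a set-sum inequality of the form $\gamma_J+\beta_J\le m'+\hat a_J$ follows from Lemma~\ref{lem:Klyachko}, since that lemma only controls a single largest eigenvalue. If it does not, I would try two fallbacks in turn:
\begin{enumerate}[label=(\roman*),leftmargin=*,itemsep=1pt]
\item Use as seed a one-step lift of Eq.~\eqref{eq:seed_1} itself, i.e.\ apply Theorem~\ref{thm:lift} twice.
\item Use the bipartite Horn bound $\gamma_J\le m'-\beta_{\text{compl}}$ together with $\rho_{AB}$ having spectrum $\gamma$.
\end{enumerate}
In either case, the sign bookkeeping for $\delta$ and the role of the index $l$ must be checked carefully.
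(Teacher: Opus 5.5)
\textbf{The seed is wrong.} Your overall strategy matches the paper's: one-step lifts of a permuted Eq.~\eqref{eq:seed_1} via Theorem~\ref{thm:lift}, with $w=\one_J$ producing $-a_{J+1}$. But the obstacle you flag is a genuine gap, because the seed $\beta_J+\gamma_J\le m'+\hat a_J$ is false in general. Counterexample: take the child $\psi'=\tfrac{1}{\sqrt3}\sum_{k=1}^3 e_k\otimes s_k$, where $s_k$ is an orthonormal basis of the symmetric subspace of $\mathbb C^2\otimes\mathbb C^2$. Then $\hat a=(\tfrac13,\tfrac13,\tfrac13)$, $\beta=\gamma=(\tfrac12,\tfrac12)$ and $m'=1$. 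For $J=\{1,2\}$ the left side is $2$ and the right side is $\tfrac53$.

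The fix needs no new input. Eq.~\eqref{eq:seed_1}, applied to the child with the parties permuted and $I\to J$, is already a seed with one single-index term (the first-row eigenvalue) and one set term:
\begin{enumerate}[label=(\roman*),leftmargin=*,itemsep=1pt]
\item For Eq.~\eqref{GK}, put the first row on $B$: $\beta_1+\gamma_J\le m'+\hat a_J$, with $u=\one_1$, $v=\one_J$.
\item For Eq.~\eqref{GH}, put it on $C$: $\gamma_1+\beta_J\le m'+\hat a_J$, with $u=\one_J$, $v=\one_1$.
\end{enumerate}

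\textbf{The Weyl triples must point the right way.} For Eq.~\eqref{GK} you need $(\{j\},\{1\},\{j\})$, i.e.\ $b_j\le\tau_j+\beta_1$, not the Pieri triple $(\{1\},\{j\},\{j\})$. With your triple (and the valid weaker seed $\beta_j+\gamma_J\le m'+\hat a_J$) you get $\delta=\one_1+\one_I$. So $\bar\kappa=1$ works only when $1\notin I$, and the hypothesis $j\notin I$ plays no role. For example, for $I=J=K=\{1\}$ and $j=2$ you would only obtain $-a_1-a_2+b_2+c_1\le m$. With the correct triple and $L=L'=\emptyset$, $\delta=\one_j+\one_I$ is a $0/1$ vector exactly because $j\notin I$, which gives $\bar\kappa=1$. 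Your alternative ``or with monotonicity'' cannot work, since monotonicity bounds $b_j$ from below, not above.

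\textbf{The lift for Eq.~\eqref{GH}.} Lift the $+c_l$ term with $(\{l\},\{1\},\{l\})$, i.e.\ $c_l\le\tau_l+\gamma_1$; this is why the seed must contain $\gamma_1$. Lift the $b$-side with $(I,J,K)$. Then take $L=\{l\}$ and $L'=I$, using monotonicity $b_l\ge\tau_l$ and $c_I\ge\tau_I$. This gives $\delta=0$ and $\bar\kappa=0$, so the coefficient of $a_1$ stays $1$.

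Your fallbacks (two-step lifts, a bipartite Horn bound) are unnecessary, and as stated they are too vague to close the gap.
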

 
\begin{proof}
Both are lifts of Eq.~\eqref{eq:seed_1}.
 
To see Eq.~\eqref{GH}, apply Theorem~\ref{thm:lift} to the seed of Eq.~\eqref{eq:seed_1} 
\begin{align}
    \gamma_1+\beta_J\le m+\hat a_J\ ,
\end{align}
with $u=\one_J$, $v=\one_1$, $w=\one_J$ and $\kappa=1$.
Take the Horn triples $(I,J,K)$ on $B$ and $(\{l\},\{1\},\{l\})$ on $C$, $L=\{l\}$ and $L'=I$. Then $\delta=\one_I+\one_l-\one_l-\one_I=0$, so $\bar\kappa=0$, and Eq.~\eqref{eq:FL} becomes
\begin{align}
    a_1-a_{J+1}+b_K-b_l-c_I+c_l\ \le\ m\ .
\end{align}

For Eq.~\eqref{GK}, apply Theorem~\ref{thm:lift} to the seed
\begin{align}
    \beta_1+\gamma_J\le m+\hat a_{J}\ ,
\end{align}
with $u=\one_1$, $v=\one_J$, $w=\one_J$ and $\kappa=1$. Take triple $(\{j\},\{1\},\{j\})$ on $x$, the Horn triple $(I,J,K)$ on $c$, and $L=L'=\emptyset$.
Since $j\notin I$, $\delta=\one_j+\one_I$ is a $0/1$-vector, so $\bar\kappa=1$. The $\alpha$-terms are absorbed by $a_1$. Then Eq.~\eqref{eq:FL} reads 
\begin{align}
    b_j+c_K\le m+a_{J+1}\ .
\end{align}
\end{proof}

Corollary~\ref{thm:GHGK} and Ressayre's inequalities both carry Horn inequalities over to non-zero Kronecker coefficients, but in different ways. Ressayre starts from the face $\lambda_1+\mu_1-\nu_1=n$, on which $g(\lambda,\mu,\nu)= c^{\hat\nu}_{\hat\lambda\hat\mu}$, and shifts the indices of all 3 parties, with $|K|=|I|+|J|$ and length bounds tied to $(e,f)$. Corollary~\ref{thm:GHGK} starts from peeling: it shifts only the peeled party, uses a Horn triple of equal-size sets on the other two, and holds without any restriction on the local dimensions. The families overlap. Eq.~(20) with $i=1$ is the case $I=J=K=\{1\}$, $l=j$ of Eq.~(36), and it also follows from \cite{ressayre2019horninequalitiesnonzerokronecker} with $I=\{1\}$, $J=\{j-1\}$, $K=\{j-1,e+f\}$ after dropping $\nu_{e+f+1}\ge0$.

\section{Dimension 4 and 5}\label{sec:d4}

We already know that all 7 inequalities in dimension 3 follow from Theorem~\ref{thm:G23}.
By computer-assisted, exact calculations we have verified that this also holds for the 50 facet classes of the marginal polytope of dimension 4~\cite{Vergne_2017,bulois2025algorithmcomputekroneckercone}. 11 of these are trivial or first row bounds, 32 are obtained with a single peeling step, and the remaining 7 need 2 peeling steps.
Of the 466 classes in dimension 5~\cite{bulois2025algorithmcomputekroneckercone}, 455 versions are valid in every dimension, derived from at most 2 peeling steps. 20 of these are trivial or first row bounds, 139 are one-step instances and further 296 need 2 peeling steps.
Interestingly, 11 classes are not implied by the inequalities recovered by Dvir's recursion at any depth.

That these inequalities do not follow from Theorem~\ref{thm:lift} can be shown in the following way. In each step, the peeled party’s rank drops by exactly one, while the other 2 parties do not grow. So each step lowers the total rank by at least one. When a party has rank 1, the state is a product state and the other 2 parties have the same spectra. This finite set of inequalities defines a polyhedron. Finding a point outside the set of admissible marginal spectra that is not excluded by the finite list of inequalities derived from Theorem~\ref{thm:lift}, shows that the list is not complete.
The resulting inequalities are checked exactly using only Horn's inequalities for $n=5$ recomputed from the Littlewood--Richardson coefficients.
Using the list of Ref.~\cite{bulois2025algorithmcomputekroneckercone}, for each one of the 11 remaining classes we found a point that is excluded by the corresponding class and therefore outside the quantum marginal polytope, but that is not excluded by our inequalities.
This means that the inequalities derived from Theorem~\ref{thm:lift} are complete for $d\le4$ and incomplete for every $d\ge5$.

\section{Conclusion}

This article establishes an important connection between Horn's problem and the quantum marginal problem of a pure tripartite system. This connection is the key to find a family of inequalities that are valid for arbitrary dimensions. Although not exhaustive in general, these inequalities fully describe the marginal polytope for dimension 2, 3 and 4. But even if they fail to give a complete description of the marginal polytope in dimension 5 or higher, they still recover a large fraction of all inequalities. This work is a decisive contribution to the theoretical understanding of the structure of eigenvalues in the pure tripartite quantum systems. At the same time, the introduced family of inequalities is of practical relevance. It can be efficiently computed in polynomial time and the resulting set gives a outer approximation of the set of allowed marginals. Such relaxations on the quantum set have proven useful in other contexts.

\section{Contributions and Acknowledgments}

SM conceived the main idea and key steps in the proofs. Most of the technical parts of this work was done with Claude Opus 5.5.

SM is thankful to Jens Siewert and Felix Huber for insightful discussions.
SM is a recipient of an APART-MINT Fellowship of the Austrian Academy of Sciences at the Atominstitut of the Technische Universit\"at Wien.

\bibliography{bibfile.bib}

\onecolumngrid
\appendix

\renewcommand{\thesection}{\Alph{section}}
\renewcommand{\thesubsection}{\Alph{section}.\arabic{subsection}}
\renewcommand{\thesubsubsection}{\Alph{section}.\arabic{subsection}.\Roman{subsubsection}}


\end{document}